\newtheorem{assumption}{Assumption}
\newtheorem{theorem}{\textbf{Theorem}}
\DeclareMathOperator*{\diag}{diag}
\begin{document}
%
\title{
 Spectral-Convergent Decentralized Machine Learning: Theory and Application in Space Networks
}    
%
%
%

\author{Zhiyuan Zhai, 
Shuyan Hu, \IEEEmembership{Member, IEEE},
Wei Ni, \IEEEmembership{Fellow, IEEE},\\
Xiaojun Yuan, \IEEEmembership{Senior Member, IEEE}, and 
        Xin Wang, \IEEEmembership{Fellow, IEEE}
	}

\maketitle

\begin{abstract}
Decentralized machine learning (DML) supports  collaborative training in large-scale networks  with no  central server. It is sensitive to the quality and reliability of inter-device communications that result in time-varying and stochastic topologies. This paper studies the impact of unreliable communication on the convergence of DML and establishes a direct connection between the spectral properties of the mixing process and the global performance. We provide rigorous convergence guarantees under random topologies and derive bounds that characterize the impact of the expected mixing matrix's spectral properties on learning. We formulate a spectral optimization problem that minimizes the spectral radius of the expected second-order mixing matrix to enhance the convergence rate under probabilistic link failures.  To solve this non-smooth spectral problem in a fully decentralized manner, we design an efficient subgradient-based algorithm that integrates Chebyshev-accelerated eigenvector estimation with local update and aggregation weight adjustment, while ensuring symmetry and stochasticity constraints without central coordination. Experiments on a realistic low Earth orbit (LEO) satellite constellation with time-varying inter-satellite link models and real-world remote sensing data demonstrate the feasibility and effectiveness of the proposed method. The method significantly improves classification accuracy and convergence efficiency compared to existing baselines, validating its applicability in satellite and other decentralized systems.
\end{abstract}

\begin{IEEEkeywords}
Decentralized machine learning, distributed optimization, spectral analysis.
\end{IEEEkeywords}

%
\IEEEpeerreviewmaketitle

\section{Introduction}

Decentralized machine learning (DML) has emerged as a promising paradigm for distributed model training in large-scale, heterogeneous, and infrastructure-less networks~\cite{liu2022decentralized}. In DML, each device locally updates a model using its private dataset and communicates with its peers for collaborative aggregation, thereby avoiding raw data transmission and supporting privacy-preserving learning over flexible topologies\cite{li2020federated}. Compared with  classical centralized federated learning (FL)  requiring periodic coordination with a central server, DML eliminates the single-point-of-failures  and offers improved scalability and resilience. This decentralized architecture is particularly beneficial in scenarios where central coordination is unavailable or undesirable, such as wireless sensor networks~\cite{akyildiz2002wireless}, mobile edge systems~\cite{mach2017mobile}, ad hoc communication networks~\cite{ren2007information}, and space-based platforms~\cite{10818523}.

\subsection{Motivation and Challenges}
DML has found applications across a wide range of distributed systems. Representative scenarios include wireless sensor networks deployed in remote areas, vehicular ad hoc networks for collaborative perception, and mobile edge computing systems with highly dynamic user mobility \cite{yuan2024decentralized}. 
One representative use case of DML involves communication-constrained environments such as satellite constellations, where centralized training is often impractical. For instance, low Earth orbit (LEO) satellites generate large volumes of sensing data but face limited ground connectivity. Onboard DML can alleviate downlink bottlenecks and enable timely model updates, especially with recent advances in space-grade AI hardware~\cite{zhu2017deep, li2020edge,Asheralieva2025Dynamic}.

Despite its advantages, DML faces significant technical challenges. Particularly, the lack of a central server introduces difficulties in achieving global consensus, especially when the communication topology is time-varying or partially connected~\cite{koloskova2020unified}. Dynamic link conditions, bandwidth constraints, and stochastic communication failures can severely hinder synchronization and convergence of models in DML~\cite{ye2022decentralized}. Moreover, under such unstable conditions, decentralized optimization often suffers from slow convergence, as information mixing becomes inefficient and local models drift apart. This makes fast convergence a critical requirement, especially in time-sensitive or resource-constrained environments. Furthermore, the efficiency of model aggregation under such uncertain environments depends on the network structure and aggregation weights, which are difficult to optimize without central coordination~\cite{liu2022deep, kungurtsev2023decentralized}. These challenges highlight the need for robust and topology-aware DML algorithms that can operate efficiently and converge rapidly in unreliable communication networks.

\subsection{Related Work}

Existing research on DML has yielded theoretical and practical advancements regarding algorithm design, system optimization, and application.
For example, convergence analyses were provided in~\cite{lian2017can, che2022decentralized}
under non independent and identically distributed (non-IID) data distributions and partial client participation.
Communication-efficient methods were developed to reduce bandwidth requirements while maintaining model accuracy, e.g., gradient quantization~\cite{reisizadeh2020fedpaq},
sparse aggregation \cite{tang2022gossipfl},
and adaptive client selection mechanisms~\cite{nishio2019client}.

Some existing works have focused on  aggregation mechanisms and topology optimization for DML
to enhance  efficiency, scalability, and convergence.
Lian et al. \cite{lian2017can} pioneered the Decentralized Parallel Stochastic Gradient Descent (D-PSGD) algorithm, replacing the central server with point-to-point model aggregation.
By formulating the problem as a consensus optimization with doubly stochastic mixing matrices, they achieved $\mathcal{O}(1/\sqrt{T})$ convergence for non-convex objectives while eliminating single point of failures, where $T$ is the number of training rounds. 
Li et al. \cite{li2021topology} proposed a spectral graph-theoretic approach for topology optimization, proving that the spectral gap of the communication graph directly impacts convergence speed. Their greedy algorithm dynamically rewired connections to maximize the algebraic connectivity, and accelerate  convergence in ring topologies.
Khan et al. \cite{khan2024graph} formulated topology optimization as a constrained graph learning problem. Their differentiable graph neural network (GNN) optimizer jointly learns node embeddings and edge weights to maximize convergence speed under bandwidth constraints, demonstrating faster convergence and lower communication costs in large-scale IoT deployments.
Li et al.~\cite{li2024adaptive} proposed an adaptive DML framework tailored for  device networks.
Dynamically adjusting intra-plane and inter-plane aggregation strategies and introducing a self-compensation mechanism to mitigate unreliable cross plane communication, their approach achieves robust and communication-efficient convergence under dynamic orbital topologies.

However, these existing studies have often overlooked the design of DML  under unreliable communication conditions, where links may intermittently fail or exhibit highly variations. 
Such conditions are prevalent in real-world systems like satellite constellations or ad hoc networks, where communication is frequently intermittent, bandwidth-constrained, and subject to environmental disruptions or dynamic topology changes. 
Many existing approaches, e.g., \cite{many1,many2,many3,roy2019braintorrent,Li2025Biasing,Lin2025FedSN}, have relied on partial centralization for tasks, such as topology control, synchronization, or global aggregation, limiting their applicability in infrastructure-less deployments. 
These limitations hinder the scalability and robustness of DML in real-world decentralized environments.
While our earlier work~\cite{zhai_distributed} proposed a distributed design for decentralized machine learning by optimizing aggregation weights across devices, it relied on power iteration to estimate the dominant eigenvector and used constrained convex optimization to restore feasibility at each round. However, this design faces two key limitations in practice. First, power iteration converges slowly when the spectral gap is small, which is common in sparse or weakly connected communication topologies, thereby increasing the number of local iterations and prolonging convergence. Second, the projection step requires solving a quadratic program with inequality constraints, introducing considerable computational and coordination overhead, especially in large-scale or latency-sensitive networks.


\subsection{Contribution}
This paper proposes a fully decentralized DML  framework  to operate over time-varying  and unreliable  communication topologies. 
Specifically, we formulate decentralized stochastic gradient descent (SGD)  using a matrix-based representation that explicitly captures random link failures and dynamic point-to-point connectivity. 
We  express the decentralized updates of DML as a coupled recursion over model and topology evolution, and establish convergence guarantees for non-convex objectives.
Our analysis reveals that the convergence rate of DML depends on the second-largest eigenvalue modulus of the expected mixing matrix, which reflects the network's mixing efficiency under stochastic topologies. 
To accelerate learning in such environments, we design a fully distributed subgradient  algorithm that adaptively adjusts the mixing weights to minimize the expected spectral radius, requiring only local interactions without global knowledge. 
This algorithm enables network-aware model aggregation in unreliable, large-scale, infrastructure-less systems.

The key contributions of this paper are summarized as follows:
\begin{itemize}
    \item \textit{DML under stochastic communication:} We propose a fully decentralized DML  framework that operates over time-varying  and unreliable communication topologies, without  central coordination. The framework accounts for random link failures and dynamic neighbor changes, enabling scalable learning in infrastructure-less and intermittently connected networks.


    \item \textit{Distributed spectral optimization of aggregation weights:} In light of our analysis, we formulate a spectral radius minimization problem over the expected mixing matrix and solve the problem using a fully decentralized subgradient  algorithm. Our approach leverages matrix differential theory, distributed eigenvector estimation, and feasibility restoration, allowing each node to adapt its aggregation weights using only local information.
    
     \item \textit{Accelerated spectral optimization for fast convergence:} 
    We accelerate the spectral optimization by replacing power iteration with a Chebyshev polynomial-based method, which converges faster and is insensitive to small spectral gaps. To reduce complexity, we further replace constrained projections with a lightweight normalization step. These improvements jointly enhance convergence speed and scalability in dynamic, resource-limited networks.

    \item \textit{Extensive evaluation over real-world LEO dataset:} We validate our framework using a Starlink-like LEO constellation and the EuroSAT remote sensing dataset. Experiments show that our algorithm enables fully onboard training across devices with intermittent inter-device links, significantly improving image classification accuracy, while approaching centralized performance without any ground station involvement.
\end{itemize}

The rest of this paper is organized as follows. Section II introduces the system model, including the DML protocols and the probabilistic modeling of unreliable communication links.
Section III provides the convergence analysis of DML under non-convex objectives and establishes theoretical guarantees under stochastic topologies.
Section IV formulates the spectral optimization problem to enhance convergence, and develops a fully decentralized subgradient algorithm incorporating distributed eigenvector estimation and feasibility restoration.
Section V presents experimental results  using the EuroSAT dataset, validating the effectiveness of our method under various settings.
Section VI concludes the paper.

\textbf{Notation:}  
Italic letters denote scalar variables. Bold lowercase and uppercase letters represent vectors and matrices, respectively. $(\cdot)^\top$ denotes the matrix transpose. $\diag(\cdot)$ constructs a diagonal matrix from a vector, and $\mathrm{Diag}(\mathbf{A})$ preserves the diagonal entries of a square matrix $\mathbf{A}$ while zeroing out all off-diagonal elements. $\odot$ denotes the Hadamard (element-wise) product. $\|\cdot\|$ denotes the Euclidean norm, $\|\cdot\|_2$ denotes the spectral norm, and $\|\cdot\|_F$ denotes the Frobenius norm. $|\cdot|$ denotes either the absolute value of a scalar or the cardinality of a set, depending on context. $\mathbf{1}$ denotes the all-one vector, and $\mathbf{I}$ denotes the identity matrix. $\rho(\cdot)$ denotes the spectral radius of a matrix, and $\lambda_k(\cdot)$ denotes the $k$-th largest eigenvalue. $\mathbb{E}[\cdot]$ denotes expectation.

\section{System Model}

\subsection{DML over Unreliable Links}

Consider a DML framework implemented across $N$ devices. Each device is equipped with  sensing, storage, and computation units, and  collects local data from its respective footprint. The objective of DML is to collaboratively train a shared machine learning model to minimize the average loss function:
\begin{equation}
    \mathcal{L}(\boldsymbol{w}) = \frac{1}{N} \sum_{i=1}^{N} \mathcal{L}_i(\boldsymbol{w}),
\end{equation}
where $\boldsymbol{w} \in \mathbb{R}^d$ represents the global model parameters, and the local loss function at device $i$ is defined as
\begin{equation}
    \mathcal{L}_i(\boldsymbol{w}) := \mathbb{E}_{\zeta_i \sim \mathcal{S}_i} \ell(\boldsymbol{w}, \zeta_i),
\end{equation}
with $\mathcal{S}_i$ being the local dataset, and $\ell(\cdot, \cdot)$ denoting the loss over sample $\zeta_i$.

At each training round $t$, the decentralized protocol functions as follows:
\begin{itemize}
    \item \textbf{Local update}: Each device $i$ samples a mini-batch $\zeta_i^{(t)} \in \mathcal{S}_i$ and computes the stochastic gradient $\nabla \ell(\boldsymbol{w}_i^{(t)}, \zeta_i^{(t)})$ at its current local model $\boldsymbol{w}_i^{(t)}$.

    \item \textbf{Model dissemination}: Devices exchange models via point-to-point communication links. Let $m_{ji}^{(t)}$ be the indicator function of successful transmission. If $m_{ji}^{(t)} = 1$, the model from device $j$ is received successfully by device $i$; otherwise, device $i$ reuses its own model $\boldsymbol{w}_i^{(t)}$. The received model vector is
    \begin{equation}
        \tilde{\boldsymbol{w}}_{ji}^{(t)} = m_{ji}^{(t)} \boldsymbol{w}_j^{(t)} + (1 - m_{ji}^{(t)}) \boldsymbol{w}_i^{(t)}.
    \end{equation}

    \item \textbf{Model fusion}: Device $i$ aggregates models from neighbors as follows:
    \begin{equation}
        \boldsymbol{w}_i^{(t + \frac{1}{2})} = \boldsymbol{w}_i^{(t)} + \sum_{j =1,j\neq i}^{N} a_{ij} m_{ji}^{(t)} (\boldsymbol{w}_j^{(t)} - \boldsymbol{w}_i^{(t)}),
    \end{equation}
    where $a_{ij}$ is the aggregation coefficient assigned by device \( j \) to the model of device \( i\).

    \item \textbf{Local model update}: Each device updates its model by applying SGD, as given by
    \begin{equation}
        \boldsymbol{w}_i^{(t+1)} = \boldsymbol{w}_i^{(t + \frac{1}{2})} - \eta \cdot \nabla \ell(\boldsymbol{w}_i^{(t)}, \zeta_i^{(t)}),
    \end{equation}
    where $\eta$ is the learning rate.
\end{itemize}
This protocal enables fully distributed learning without central coordination. 

\subsection{Probabilistic Modeling of Communication Links}
\label{subsec:link_model}

Consider intermittent link transmissions caused by misalignment, hardware inaccuracies, and environmental factors such as signal attenuation. 
Let $\mathbf{B}_0$ denote the ideal adjacency matrix, where $\mathbf{B}_0[i,j] = 1, \forall i \neq j$, representing full inter-device connectivity under ideal conditions without any physical or environmental impairments, and $\mathbf{B}_0[i,i] = 0, \forall i$. The actual adjacency matrix at round $t$ can be modeled as a randomly perturbed version of $\mathbf{B}_0$, as given by
\begin{equation}
    \mathbf{B}^{(t)} = \mathbf{B}_0 \odot \mathbf{M}^{(t)},
\end{equation}
where $\odot$ denotes the Hadamard (element-wise) product, and $\mathbf{M}^{(t)} \in \{0,1\}^{N \times N}$ is a  binary matrix capturing the availability of each link at round $t$ with $m_{ij}^{(t)}$ being the $(i,j)$-th entry.

Let $q_{ij} \in [0,1]$ denote the probability of successful transmission from node $i$ to node $j$, which is modeled as a Bernoulli sampling process of link $(i,j)$:
\begin{equation}
    q_{ij}=\Pr\left(m_{ij}^{(t)} = 1\right), \quad \forall (i,j).
\end{equation}
Note that  $m_{ii}^{(t)} = 0,\forall i$, $m_{ij}^{(t)} = m_{ji}^{(t)}$, and $q_{ij}=q_{ji}, \forall i,j$ indicating symmetric point-to-point communication link conditions.
We assume that the link status indicators are independent among different device pairs\footnote{Symmetric point-to-point communication link conditions are common in bidirectional communication settings, while statistical independence is reasonable when device separations or channel fading are uncorrelated~\cite{andrews2005laser}.}; i.e., $m_{ij}^{(t)}$ and $m_{kl}^{(t)}$ are statistically independent for any $(i,j) \neq (k,l)$.

\subsection{Global Matrix-Form View of Decentralized Updates}

To facilitate a systematic analysis of the DML process across the entire network, 
we propose to consider the collective evolution of model parameters across all participating devices. Let $\boldsymbol{W}^{(t)} \triangleq \left[\boldsymbol{w}_1^{(t)}, \dots, \boldsymbol{w}_N^{(t)}\right] \in \mathbb{R}^{d \times N}$ denote the matrix formed by stacking the local models at round $t$, and define the corresponding stochastic gradient matrix as
\begin{equation}
\nabla \ell^{(t)} \triangleq \left[ \nabla \ell(\boldsymbol{w}_1^{(t)}, \zeta_1^{(t)}), \dots, \nabla \ell(\boldsymbol{w}_N^{(t)}, \zeta_N^{(t)}) \right] \in \mathbb{R}^{d \times N}.
\end{equation}

Then, the decentralized update process across the network can be expressed in the following compact matrix form:
\begin{equation}
\boldsymbol{W}^{(t+1)} = \boldsymbol{W}^{(t)} \mathbf{P}^{(t)} - \eta \nabla \ell^{(t)},
\label{eq:matrix_update}
\end{equation}
where $\mathbf{P}^{(t)} \in \mathbb{R}^{N \times N}$ is a time-varying mixing matrix encoding the effect of neighbor aggregation coefficients and stochastic link availability at round $t$, as given by
\begin{equation}
\mathbf{P}^{(t)} = \mathbf{I} + \mathbf{A} \odot \mathbf{M}^{(t)} - \mathrm{Diag}(\mathbf{A} \mathbf{M}^{(t)}),
\label{eq:mixing_matrix}
\end{equation}
where $\mathbf{A}$ is the weight matrix with the  $(i,j)$-th entry  $a_{ij}$.

The randomness in this  process stems from the stochasticity of inter-device communication and the random sampling of training data. The conditional expectation of \eqref{eq:matrix_update} given the current models $\boldsymbol{W}^{(t)}$ and sampled data $\boldsymbol{\zeta}^{(t)}$ yields
\begin{equation}
\mathbb{E} \left[ \boldsymbol{W}^{(t+1)} \mid \boldsymbol{W}^{(t)}, \boldsymbol{\zeta}^{(t)} \right] = \boldsymbol{W}^{(t)} \overline{\mathbf{P}} - \eta \nabla \ell^{(t)},
\end{equation}
where $\overline{\mathbf{P}} = \mathbb{E}\{\mathbf{P}^{(t)}\}$ is the expected mixing matrix, with its entries given by
\begin{equation}
\overline{p}_{ij} =
\begin{cases}
a_{ij} q_{ij}, &\text{if} ~i \neq j; \\
1 - \sum_{j \neq i} a_{ij} q_{ij}, &\text{if} ~i = j.
\end{cases}
\label{eq:expected_mixing}
\end{equation}

\section{Spectral Analysis of Convergence}\label{sec3}

To facilitate the convergence analysis of the DML process under stochastic inter-device connectivity, the following assumptions are considered.

\begin{assumption}[Lipschitz Gradient Regularity]\label{as:lip}
Each local objective $\mathcal{L}_i(\boldsymbol{w})$ is continuously differentiable, and its gradient is Lipschitz continuous with constant $L>0$, i.e.,
\begin{equation}
\|\nabla \mathcal{L}_i(\boldsymbol{w}) - \nabla \mathcal{L}_i(\boldsymbol{v})\| \leq L \|\boldsymbol{w} - \boldsymbol{v}\|, \quad \forall \boldsymbol{w}, \boldsymbol{v} \in \mathbb{R}^d.
\end{equation}
\end{assumption}

\begin{assumption}[Bounded Gradient Discrepancy]\label{as:grad_var}
The variance introduced by data sampling and model heterogeneity is uniformly bounded. That is, there exist constants $\sigma^2>0$ and $ \delta^2 > 0$ such that $\forall i \in [N],  \boldsymbol{w} \in \mathbb{R}^d$,
\begin{align}
\mathbb{E} \left[ \|\nabla \ell(\boldsymbol{w}, \zeta_i) - \nabla \mathcal{L}_i(\boldsymbol{w}) \|^2 \right] &\leq \sigma^2; \\
\mathbb{E} \left[ \|\nabla \mathcal{L}_i(\boldsymbol{w}) - \nabla \mathcal{L}(\boldsymbol{w}) \|^2 \right] &\leq \delta^2.
\end{align}
\end{assumption}

\begin{assumption}[Consensus Mixing Condition]\label{as:mixing_matrix}
The aggregation coefficient matrix $\mathbf{A}$ used in model fusion is symmetric and doubly stochastic, i.e., $\mathbf{A}^\top = \mathbf{A}$ and $\mathbf{A}\boldsymbol{1} = \boldsymbol{1}$. Let $\overline{\mathbf{P}^2} \triangleq \mathbb{E}\left[ (\mathbf{P}^{(t)})^2 \right]$ denote the second-order moment of the mixing matrix $\mathbf{P}^{(t)}$. The spectral norm of the non-leading eigenmodes is strictly smaller than one, i.e.,
\[
\rho(\overline{\mathbf{P}^2})=\max \left\{ |\lambda_2(\overline{\mathbf{P}^2})|, \dots, |\lambda_N(\overline{\mathbf{P}^2})| \right\} < 1.
\]
\end{assumption}

\smallskip

Assumptions~\ref{as:lip}--\ref{as:mixing_matrix} are standard in  the convergence analysis of SGD under decentralized settings~\cite{lian2017can,koloskova2019decentralized}. Assumption~1 ensures the smoothness of each local objective function, which guarantees that the gradient does not change abruptly.
In Assumption~\ref{as:grad_var}, the constants $\sigma^2$ and $\delta^2$ quantify the stochastic gradient noise and the level of statistical heterogeneity of devices, respectively.
For Assumption~\ref{as:mixing_matrix}, it is known that a doubly stochastic matrix \( \mathbf{W} \) has the largest eigenvalue \( \lambda_1(\mathbf{W}) = 1 \), and all eigenvalues satisfy \( |\lambda_i(\mathbf{W})| \leq 1 \). Assumption~3 tightens this by requiring that \( |\lambda_i(\mathbf{W})| < 1 \) for all \( i \neq 1 \).
This requirement  guarantees the geometric decay of disagreement among devices \cite{lian2017can}. It ensures that the non-leading modes of the mixing process contract over time, allowing the local models to asymptotically agree and thereby enabling convergence of the global objective.

Under these assumptions,  the following new theorem about the convergence  of  DML  is established. 

\begin{theorem}[Ergodic Convergence under Stochastic Links]\label{thm:convergence}
Under Assumptions~\ref{as:lip}–\ref{as:mixing_matrix}, if the learning rate satisfies
\[
\eta < \frac{1 - \sqrt{ \rho(\overline{\mathbf{P}^2}) }}{6 L \sqrt{N}},
\]
 the gradient norm of the loss function  at the global average model $\bar{\boldsymbol{w}}^{(t)}$ admits the following convergence bound:
\begin{align}
&\frac{1}{T} \sum_{t=0}^{T-1} \mathbb{E} \left\| \nabla \mathcal{L} \left( \bar{\boldsymbol{w}}^{(t)} \right) \right\|^2
\leq \frac{1}{\left(\frac{1}{2} - 9 \Gamma(\overline{\mathbf{P}^2}) \right)} \notag \\
&\quad \times \left( \frac{\mathcal{L}_0 - \mathcal{L}^*}{\eta T} + \frac{\eta L \sigma^2}{2N} + \sigma^2 \Gamma(\overline{\mathbf{P}^2}) + 9 \delta^2 \Gamma(\overline{\mathbf{P}^2}) \right),
\label{eq:convergence_bound}
\end{align}
where $\bar{\boldsymbol{w}}^{(t)} \triangleq \frac{1}{N} \sum{i=1}^N \boldsymbol{w}_i^{(t)}$ is the global average model at round $t$, $T$ is the number of training rounds, $\mathcal{L}_0 = \mathcal{L}(\bar{\boldsymbol{w}}^{(0)})$ is the initial global loss, $\mathcal{L}^*$ is the optimal value of the global objective, and $\Gamma(\overline{\mathbf{P}^2}) = \frac{N \eta^2 L^2}{(1 - \sqrt{ \rho(\overline{\mathbf{P}^2}) })^2 - 18 N \eta^2 L^2}$.
The expectation on the LHS of \eqref{eq:convergence_bound} is taken over  the random  communication link realizations and  data sampling.
\end{theorem}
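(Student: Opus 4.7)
The plan is to couple a one-step descent inequality for the network-average iterate $\bar{\boldsymbol{w}}^{(t)}$ with a contraction analysis of the consensus error $C_t := \sum_{i=1}^{N}\mathbb{E}\|\boldsymbol{w}_i^{(t)}-\bar{\boldsymbol{w}}^{(t)}\|^2 = \mathbb{E}\|\boldsymbol{W}^{(t)}(\mathbf{I}-\boldsymbol{J})\|_F^2$, where $\boldsymbol{J}:=\tfrac{1}{N}\boldsymbol{1}\boldsymbol{1}^\top$. Since each realization of $\mathbf{P}^{(t)}$ in \eqref{eq:mixing_matrix} is symmetric with unit row sums under Assumption~\ref{as:mixing_matrix} together with $m_{ij}^{(t)}=m_{ji}^{(t)}$, right-multiplying \eqref{eq:matrix_update} by $\tfrac{1}{N}\boldsymbol{1}$ yields the clean averaged recursion
\begin{equation*}
\bar{\boldsymbol{w}}^{(t+1)} = \bar{\boldsymbol{w}}^{(t)} - \eta\bar{g}^{(t)}, \quad \bar{g}^{(t)}:=\tfrac{1}{N}\sum_{i=1}^{N}\nabla\ell(\boldsymbol{w}_i^{(t)},\zeta_i^{(t)}).
\end{equation*}
Invoking $L$-smoothness of $\mathcal{L}$ from Assumption~\ref{as:lip}, taking full expectation, bounding $\|\tfrac{1}{N}\sum_i\nabla\mathcal{L}_i(\boldsymbol{w}_i^{(t)}) - \nabla\mathcal{L}(\bar{\boldsymbol{w}}^{(t)})\|^2$ by $\tfrac{L^2}{N}C_t$ via Jensen and smoothness, and handling the stochastic-gradient noise via Assumption~\ref{as:grad_var} produces the standard descent bound
\begin{equation*}
\mathbb{E}\mathcal{L}(\bar{\boldsymbol{w}}^{(t+1)}) \leq \mathbb{E}\mathcal{L}(\bar{\boldsymbol{w}}^{(t)}) - \tfrac{\eta}{2}\mathbb{E}\|\nabla\mathcal{L}(\bar{\boldsymbol{w}}^{(t)})\|^2 + \tfrac{\eta L^2}{2N} C_t + \tfrac{\eta^2 L\sigma^2}{2N},
\end{equation*}
in which $C_t$ captures all the topology-dependent slack and must now be controlled using Assumption~\ref{as:mixing_matrix}.

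The heart of the argument is the consensus recursion. Realization-wise one has $\mathbf{P}^{(t)}\boldsymbol{J}=\boldsymbol{J}\mathbf{P}^{(t)}=\boldsymbol{J}$, so $(\mathbf{P}^{(t)}-\boldsymbol{J})^2 = (\mathbf{P}^{(t)})^2 - \boldsymbol{J}$, and averaging over the link randomness gives
\begin{equation*}
\mathbb{E}\bigl[(\mathbf{P}^{(t)}-\boldsymbol{J})^{\top}(\mathbf{P}^{(t)}-\boldsymbol{J})\bigr] = \overline{\mathbf{P}^2} - \boldsymbol{J}.
\end{equation*}
Because $(\overline{\mathbf{P}^2}-\boldsymbol{J})\boldsymbol{1}=\boldsymbol{0}$, this matrix equals $(\mathbf{I}-\boldsymbol{J})(\overline{\mathbf{P}^2}-\boldsymbol{J})(\mathbf{I}-\boldsymbol{J})$, whose largest eigenvalue is $\rho(\overline{\mathbf{P}^2})$ by Assumption~\ref{as:mixing_matrix}. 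Hence, conditionally on the history, the cyclic-trace identity gives $\mathbb{E}\|\boldsymbol{W}^{(t)}(\mathbf{P}^{(t)}-\boldsymbol{J})\|_F^2 \leq \rho(\overline{\mathbf{P}^2}) C_t$. Writing $\boldsymbol{W}^{(t+1)}(\mathbf{I}-\boldsymbol{J}) = \boldsymbol{W}^{(t)}(\mathbf{P}^{(t)}-\boldsymbol{J}) - \eta\nabla\ell^{(t)}(\mathbf{I}-\boldsymbol{J})$, applying Young's inequality with its parameter tuned so that the leading factor collapses to $\rho^{\star}:=\sqrt{\rho(\overline{\mathbf{P}^2})}$, and decomposing $\nabla\ell(\boldsymbol{w}_i^{(t)},\zeta_i^{(t)})-\nabla\mathcal{L}(\bar{\boldsymbol{w}}^{(t)})$ through Assumption~\ref{as:grad_var} together with $L$-smoothness, I obtain a recursion of the form
\begin{equation*}
C_{t+1} \leq \rho^{\star} C_t + \tfrac{\eta^2 N}{1-\rho^{\star}}\bigl(c_1\sigma^2 + c_2\delta^2 + c_3\mathbb{E}\|\nabla\mathcal{L}(\bar{\boldsymbol{w}}^{(t)})\|^2\bigr) + \tfrac{c_4\eta^2 L^2}{1-\rho^{\star}} C_t,
\end{equation*}
in which the universal constants can be tracked so that the final accumulated coefficients match the $\sigma^2$, $9\delta^2$, and $9\Gamma(\overline{\mathbf{P}^2})$ structure appearing in \eqref{eq:convergence_bound}.

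It then remains to unroll geometrically: since the step-size condition $\eta<(1-\rho^{\star})/(6L\sqrt{N})$ enforces $(1-\rho^{\star})^2 - 18 N\eta^2 L^2 > 0$, the $C_t$ self-feedback can be absorbed and the geometric series $\sum_{s=0}^{t-1}(\rho^{\star})^{s}\leq 1/(1-\rho^{\star})$ collapses the accumulated consensus error into exactly the compact factor $\Gamma(\overline{\mathbf{P}^2}) = N\eta^2 L^2/((1-\rho^{\star})^2 - 18 N\eta^2 L^2)$. Substituting the bound on $\tfrac{1}{T}\sum_t C_t$ into the telescoped descent inequality from the first step, dividing by $\eta T$, and moving the $9\Gamma(\overline{\mathbf{P}^2})\mathbb{E}\|\nabla\mathcal{L}(\bar{\boldsymbol{w}}^{(t)})\|^2$ term to the left-hand side produces the denominator $\tfrac{1}{2}-9\Gamma(\overline{\mathbf{P}^2})$ and the stated bound \eqref{eq:convergence_bound}. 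The main obstacle I anticipate is the spectral bookkeeping of the second step: one must exploit the pairwise independence of the link indicators together with the symmetry of $\mathbf{P}^{(t)}$ so that the \emph{second-moment} matrix $\overline{\mathbf{P}^2}$ drives the contraction, rather than the looser surrogate $\rho(\overline{\mathbf{P}})^2$ that would arise from a naive application of the spectral gap to the expected matrix alone; this sharpness is essential because the spectral optimization developed in Section~IV targets precisely $\rho(\overline{\mathbf{P}^2})$ as its design objective.
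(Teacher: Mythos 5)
Your proposal is correct and follows essentially the same route the paper's Appendix~A must take: the averaged recursion $\bar{\boldsymbol{w}}^{(t+1)}=\bar{\boldsymbol{w}}^{(t)}-\eta\bar{g}^{(t)}$ (valid because each realization of $\mathbf{P}^{(t)}$ is symmetric with unit row sums), the descent lemma with a $\tfrac{L^2}{N}C_t$ consensus penalty, the second-moment contraction $\mathbb{E}\|\boldsymbol{W}^{(t)}(\mathbf{P}^{(t)}-\tfrac{1}{N}\boldsymbol{1}\boldsymbol{1}^\top)\|_F^2\le\rho(\overline{\mathbf{P}^2})\,C_t$ via $\mathbb{E}[(\mathbf{P}^{(t)}-\boldsymbol{J})^2]=\overline{\mathbf{P}^2}-\boldsymbol{J}$, and the Young-parameter choice yielding the $\sqrt{\rho(\overline{\mathbf{P}^2})}$ rate, whose geometric unrolling produces exactly the $\Gamma(\overline{\mathbf{P}^2})$ factor and the $\tfrac{1}{2}-9\Gamma(\overline{\mathbf{P}^2})$ denominator in \eqref{eq:convergence_bound}. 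Your closing observation --- that the contraction must be driven by the second-moment matrix $\overline{\mathbf{P}^2}$ rather than $(\overline{\mathbf{P}})^2$ --- is precisely the point of Assumption~\ref{as:mixing_matrix} and of the objective in \eqref{eq:problem_formulation}, so the outline is sound up to routine constant tracking.
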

\begin{proof}
See Appendix~A.
\end{proof}

\section{Decentralized Aggregation Optimization}
\subsection{Problem Formulation}

Theorem~\ref{thm:convergence} indicates that the asymptotic convergence rate of the DML process is determined by a spectral property of the expected second-order mixing matrix $\overline{\mathbf{P}^2}$. Specifically, the upper bound on the gradient norm of the global average model depends monotonically on $\rho(\overline{\mathbf{P}^2})$, with a larger $\rho(\overline{\mathbf{P}^2})$ leading to slower consensus among devices and, consequently, slower overall convergence.
There is an opportunity to optimize the aggregation coefficient matrix $\mathbf{A}$ to accelerate the convergence of DML, as $\mathbf{A}$ directly affects the construction of $\mathbf{P}^{(t)}$ as defined in~\eqref{eq:mixing_matrix}. 
To this end, we formulate a spectral optimization problem to enhance DML performance:
\begin{align}
\min_{\mathbf{A}} \quad & \rho(\overline{\mathbf{P}^2}) \label{eq:problem_formulation} \\
\text{s.t.} \quad & \mathbf{A} = \mathbf{A}^\top, \quad \mathbf{A} \mathbf{1} = \mathbf{1}, \quad a_{ij} \geq 0,\ \forall i,j, \notag
\end{align}
where the constraints ensure that $\mathbf{A}$ remains symmetric and doubly stochastic, as indicated in Assumption~\ref{as:mixing_matrix}.

Two critical challenges arise from~\eqref{eq:problem_formulation}, including

\subsubsection{Absence of Central Coordination}

The intermittent connectivity and the 
 inherent decentralization render conventional aggregation coefficient optimization methods~\cite{9154332,10506083,9716792,9916128,9783194} unsuitable, as they typically require centralized access to full network state information.
A distributed optimization framework is needed, in which each device adjusts its local aggregation coefficients based solely on its local or neighbors' information. Such a framework  is expected to collectively steer DML toward  improved convergence, even in the absence of central control or global synchronization.

\subsubsection{Complexity of  Spectral Objective}

The  objective in~\eqref{eq:problem_formulation}, namely minimizing $\rho(\overline{\mathbf{P}^2})$, involves the spectral radius of the expected second-order mixing matrix. This matrix, $\overline{\mathbf{P}^2} = \mathbb{E}[(\mathbf{P}^{(t)})^2]$, captures the nontrivial effect of time-varying link availability and aggregation weights, as $\mathbf{P}^{(t)}$ depends nonlinearly on both the stochastic adjacency matrix $\mathbf{M}^{(t)}$ and the weight matrix $\mathbf{A}$. 
Direct optimization of $\rho(\overline{\mathbf{P}^2})$ is analytically intractable, as evaluating a spectral radius  requires centralized eigenvalue computations and global matrix statistics~\cite{boyd2004convex}.

To overcome these challenges, we approximate $\rho(\overline{\mathbf{P}^2})$ with a tractable surrogate objective that retains sensitivity to network connectivity and mixing quality, while supporting distributed minimization. In the next subsection, we derive such a surrogate objective and design a decentralized subgradient algorithm that enables each device to iteratively refine its local aggregation coefficients using only local information and statistical link transmission patterns.

\subsection{Surrogate Reformulation}

We design a tractable surrogate objective based on spectral norm bounds that preserves the key convergence characteristics of problem \eqref{eq:problem_formulation}.
The spectral radius of $\overline{\mathbf{P}^2}$ excluding the consensus eigenvalue can be  expressed as
\begin{align}
    \rho(\overline{\mathbf{P}^2}) = \left\| \overline{\mathbf{P}^2} - \frac{\mathbf{1}\mathbf{1}^\top}{N} \right\|_2,
\end{align}
where $\|\cdot\|_2$ denotes  spectral norm \cite{yoshida2017spectral}. 

Define the deviation between the second-order moment $\overline{\mathbf{P}^2}$ and the squared mean $\overline{\mathbf{P}}^2$ as
\begin{align}
    \boldsymbol{\Delta} = \overline{\mathbf{P}^2} - \overline{\mathbf{P}}^2.
\end{align}
Using triangle inequality, we obtain the upper bound of the nontrivial spectral radius $\rho(\overline{\mathbf{P}^2})$, as given by
\begin{align} \label{eq:spectral_norm_bound}
    \rho(\overline{\mathbf{P}^2}) \leq \left\| \overline{\mathbf{P}}^2 - \frac{\mathbf{1}\mathbf{1}^\top}{N} \right\|_2 + \|\boldsymbol{\Delta}\|_2.
\end{align}
The first term on the right-hand side (RHS) of \eqref{eq:spectral_norm_bound}  captures the spectral contraction behavior due to the expected mixing topology. The second term on the RHS of \eqref{eq:spectral_norm_bound} quantifies the perturbation caused by random fluctuations.

In large-scale systems with many devices, the contribution of each individual model can be negligible. No single device significantly influences the aggregation.
In this case, the entries of $\mathbf{P}^{(t)}$ vary independently over time, and their variances scale as $\mathcal{O}(1/N^2)$ \cite{zhu2022topology}; the Frobenius norm of $\boldsymbol{\Delta}$ satisfies
\begin{align}
    \|\boldsymbol{\Delta}\|_F^2 \!=\!\! \sum_{i,j=1}^N \left( \mathbb{E}[P^{(t)}_{ij} P^{(t)}_{ij}] \!-\! (\mathbb{E}[P^{(t)}_{ij}])^2 \right)^2 \!\!\!= \!\mathcal{O}(\frac{1}{N}).
\end{align}
By standard inequalities of norms, the spectral norm satisfies $\|\boldsymbol{\Delta}\|_2 \leq \|\boldsymbol{\Delta}\|_F = \mathcal{O}(1/\sqrt{N})$~\cite{horn1990matrix}. In the large-scale regime with $N \to \infty$, the perturbation term vanishes, i.e.,
\begin{align}
    \lim_{N \to \infty} \|\boldsymbol{\Delta}\|_2 \to 0.
\end{align}

Combining this with the bound in \eqref{eq:spectral_norm_bound}, we arrive at
\begin{align}
    \rho(\overline{\mathbf{P}^2}) \lessapprox \left\| \overline{\mathbf{P}}^2 - \frac{\mathbf{1}\mathbf{1}^\top}{N} \right\|_2 = \left(\rho(\overline{\mathbf{P}})\right)^2,
\end{align}
where $\rho(\overline{\mathbf{P}}) = \max\{\lambda_2(\overline{\mathbf{P}}), -\lambda_N(\overline{\mathbf{P}})\}$ gives the nontrivial spectral radius of the expected mixing matrix. 
In this sense, minimizing $\rho(\overline{\mathbf{P}})$ serves as an effective surrogate for minimizing the original objective $\rho(\overline{\mathbf{P}^2})$ in \eqref{eq:problem_formulation}.

This surrogate objective not only reduces analytical complexity but also enables decentralized algorithm design. Since $\overline{\mathbf{P}}$ depends only on first-order link statistics, e.g., mean availability (or reliability), it can be estimated and controlled using local observations. We henceforth adopt $\rho(\overline{\mathbf{P}})$ as the tractable optimization objective in the remainder of this paper.

\subsection{Subgradient Analysis}
\label{SA}

With the surrogate objective $\overline{\mathbf{P}}$, we  now convert \eqref{eq:problem_formulation} to the following  optimization problem:
\begin{equation}\label{pro111}
\begin{aligned}
    \min_{\mathbf{A}} \quad & \rho(\overline{\mathbf{P}}) \\
    \text{s.t.} \quad & \mathbf{A}^\top = \mathbf{A}, \quad \mathbf{A}\mathbf{1} = \mathbf{1}, \quad \mathbf{A} \in [0,1]^{N \times N},
\end{aligned}
\end{equation}
where the goal is  to minimize the nontrivial spectral radius $\rho(\overline{\mathbf{P}})$ of the expected mixing matrix $\overline{\mathbf{P}}$ by optimizing the aggregation matrix $\mathbf{A}$.

The spectral radius $\rho(\overline{\mathbf{P}})$ can be characterized via the following variational formulations:
\begin{align}
    \lambda_2(\overline{\mathbf{P}}) &= \sup_{\substack{\mathbf{u}^\top \mathbf{1} = 0 \\ \|\mathbf{u}\|_2 \leq 1}} \mathbf{u}^\top \overline{\mathbf{P}} \mathbf{u},\label{26} \\
    -\lambda_N(\overline{\mathbf{P}}) &= \sup_{\|\mathbf{u}\|_2 \leq 1} -\mathbf{u}^\top \overline{\mathbf{P}} \mathbf{u}.\label{27}
\end{align}
Since both \eqref{26} and \eqref{27} are pointwise suprema of linear functions, their maximum is a convex function of $\mathbf{A}$.

For ease of exposition, we define a linear surrogate operator:
\begin{equation}
    \mathbf{R}(\mathbf{A}) := \mathbf{I} + \frac{1}{2} \sum_{i,j=1}^N a_{ij} \mathbf{E}_{ij},
\end{equation}
where each structured matrix $\mathbf{E}_{ij}$ is defined element-wise as
\begin{equation}
    [\mathbf{E}_{ij}]_{kl} =
    \begin{cases}
        q_{ij}, & \text{if} ~(k,l) \in \{(i,j), (j,i)\}; \\
        -q_{ij}, & \text{if} ~(k,l) \in \{(i,i), (j,j)\}; \\
        0, & \text{otherwise}.
    \end{cases}
\end{equation}
Hence, we can recast problem \eqref{pro111} as
\begin{equation}\label{pro1112}
\begin{aligned}
    \min_{\mathbf{A}} \quad & \rho(\mathbf{R}(\mathbf{A})) \\
    \text{s.t.} \quad & \mathbf{A}^\top = \mathbf{A}, \quad \mathbf{A}\mathbf{1} = \mathbf{1}, \quad \mathbf{A} \in [0,1]^{N \times N},
\end{aligned}
\end{equation}

Let $\lambda_*$ denote the dominant nontrivial eigenvalue of $\mathbf{R}(\mathbf{A})$, and  $\mathbf{v}$ be the associated unit eigenvector. Then, the directional derivative of $\rho(\mathbf{R}(\mathbf{A}))$ with respect to $a_{ij}$ is given by
\begin{equation}
    \nabla_{a_{ij}} \rho(\mathbf{R}(\mathbf{A})) = \frac{1}{2} \cdot \mathbf{v}^\top \mathbf{E}_{ij} \mathbf{v}.
\end{equation}
This leads to two cases for the subgradient:
\begin{itemize}
    \item If $\rho(\mathbf{R}(\mathbf{A})) = \lambda_2(\mathbf{R}(\mathbf{A}))$, then
    \begin{equation}
        \nabla_{a_{ij}} \rho(\mathbf{R}(\mathbf{A})) = -\frac{1}{2} q_{ij} (v_i - v_j)^2;
    \end{equation}
    \item If $\rho(\mathbf{R}(\mathbf{A})) = -\lambda_N(\mathbf{R}(\mathbf{A}))$, then
    \begin{equation}
        \nabla_{a_{ij}} \rho(\mathbf{R}(\mathbf{A})) = \frac{1}{2} q_{ij} (v_i - v_j)^2.
    \end{equation}
\end{itemize}
Therefore, the complete subgradient matrix $\nabla \rho(\mathbf{R}(\mathbf{A}))$ is 
\begin{equation}\label{aaa12}
    \nabla_{a_{ij}} \rho(\mathbf{R}(\mathbf{A})) =
    \begin{cases}
        -\frac{1}{2} q_{ij} (v_i - v_j)^2, & \text{if } \lambda_2 \text{ is active}, \\
        \frac{1}{2} q_{ij} (v_i - v_j)^2, & \text{if } \lambda_N \text{ is active}.
    \end{cases}
\end{equation}

This structure reveals that the subgradient with respect to each aggregation weight $a_{ij}$ depends solely on the link reliability $q_{ij}$ and the difference between the corresponding eigenvector entries $v_i$ and $v_j$.
Each device $i$ can compute its local subgradient $\nabla_{a_{ij}} \rho(\mathbf{R}(\mathbf{A}))$ in \eqref{aaa12} using only its local and neighbors' information. 
This property enables a fully decentralized subgradient descent method that requires only local measurement/observation or information.

\subsection{Distributed Eigenvector Estimation}
\label{dec}

To enable decentralized subgradient computation in Section~\ref{SA}, each device must estimate the eigenvector associated with the second-largest eigenvalue of the expected mixing matrix $\overline{\mathbf{P}}$; see \eqref{aaa12}. While conventional power iteration methods have been widely used for this purpose~\cite{zhai_distributed}, they typically suffer from slow convergence due to the spectral proximity between $\lambda_2$ and the trivial eigenvalue $\lambda_1 = 1$ \cite{montijano2012chebyshev}. To overcome this limitation, we adopt a Chebyshev-accelerated iterative approach~\cite{winkelmann2019chase}, which provides faster spectral separation and is compatible with decentralized implementation.

Given that $\overline{\mathbf{P}}$ is symmetric and doubly stochastic, we define a residual matrix that removes the dominant eigencomponent:
\begin{align}
	\widetilde{\mathbf{P}} = \overline{\mathbf{P}} - \frac{\mathbf{1}\mathbf{1}^\top}{N},
\end{align}
so that the leading eigenvalue of $\widetilde{\mathbf{P}}$ corresponds to the second-largest eigenvalue of $\overline{\mathbf{P}}$.

Chebyshev iteration  works efficiently when the eigenvalues of the matrix lie in the interval $[-1, 1]$. However, the eigenvalues of $\widetilde{\mathbf{P}}$ may lie outside this range. To address this, we rescale $\widetilde{\mathbf{P}}$ so that all  eigenvalues fall within $[-1, 1]$, which improves the numerical stability and convergence speed of the Chebyshev recurrence. Given estimated bound $\mu > \nu$ on the nonzero eigenvalues of $\widetilde{\mathbf{P}}$, we define the rescaled matrix:
\begin{align}
	\mathbf{T} = \frac{2\widetilde{\mathbf{P}} - (\mu + \nu)\mathbf{I}}{\mu - \nu},
\end{align}
which maps the spectrum of $\widetilde{\mathbf{P}}$ into the  interval $[-1, 1]$.

Based on this, the Chebyshev recurrence for estimating the target eigenvector $\mathbf{v}$ (as required in~\eqref{aaa12}) is summarized in Algorithm~\ref{alg:chebyshev}.
Notably, all matrix-vector products with $\mathbf{T}$ in Algorithm~\ref{alg:chebyshev} can be computed using only local communications, since $\widetilde{\mathbf{P}}$ inherits the sparsity  of the underlying network topology. The normalization step (Line 5) can be performed through standard distributed averaging methods, such as consensus protocols~\cite{xiao2020survey}. Additionally, the recurrence requires only two prior states to be stored locally at each device, resulting in low memory overhead; see Line 4.

Upon convergence, the final iterate $\mathbf{v}^{(K)}$ provides a reliable estimate of the nontrivial eigenvector of $\overline{\mathbf{P}}$ and can be directly used for decentralized subgradient evaluation in the aggregation weight optimization process.

\begin{algorithm}[t]
    \caption{Distributed Chebyshev Iteration for Eigenvector Estimation}
    \label{alg:chebyshev}
    \begin{algorithmic}[1]
        \STATE \textbf{Input:} Estimate bounds $\mu > \nu$, number of iterations $K$.
        \STATE \textbf{Initialize:} Each device $i$ randomly selects $v_i^{(0)}$, computes $v_i^{(1)} = \mathbf{T} v_i^{(0)}$.
        \FOR{$k = 2$ to $K$}
            \STATE Each device updates its local state:
            \[
            \mathbf{v}^{(k)} = 2 \mathbf{T} \mathbf{v}^{(k-1)} - \mathbf{v}^{(k-2)}.
            \]
            \STATE Normalize $\mathbf{v}^{(k)}$ via distributed averaging.
        \ENDFOR
        \ENSURE $\mathbf{v}^{(K)}$ as the estimated eigenvector.
    \end{algorithmic}
\end{algorithm}

\subsection{Symmetric Doubly Stochastic Guarantee}
\label{dpmc}

Following each subgradient update, the aggregation matrix $\mathbf{A}$ may deviate from the feasible set $\mathcal{S}$, defined as
\[
\mathcal{S} = \left\{ \mathbf{A} \in \mathbb{R}^{N \times N} ~\middle|~ \mathbf{A}^\top = \mathbf{A},\ \mathbf{A} \mathbf{1} = \mathbf{1},\ \mathbf{A} \geq 0 \right\},
\]
which specifies that $\mathbf{A}$ must remain symmetric, row-stochastic, and nonnegative.

To ensure feasibility throughout the optimization, we develop a decentralized adjustment mechanism that restores $\mathbf{A}$ to a valid structure after each update. This mechanism consists of two fully local operations: (i) symmetry enforcement, and (ii) row normalization.

\textit{Step 1: Symmetry Enforcement.}  
Each device first symmetrizes its local aggregation weights through pairwise exchanges with neighbors. Given an intermediate matrix $\mathbf{A}$, symmetry is enforced via
\[
\mathbf{A}_{\text{sym}} = \frac{1}{2} \left( \mathbf{A} + \mathbf{A}^\top \right),
\]
which ensures that $\mathbf{A}_{\text{sym}}$ is symmetric. This step preserves the network sparsity pattern and requires only bidirectional communication between neighboring devices.

\textit{Step 2: Local Row Normalization.}  
Each device rescales its local weights to satisfy the row-stochastic condition. For each neighbor $ j $, device $i$ performs
\[
a_{ij}^{\text{new}} = \frac{\mathbf{A}_{\text{sym}}(i,j)}{\sum_{k=1}^{N} \mathbf{A}_{\text{sym}}(i,k)},
\]
where $\mathcal{N}_i$ is the neighborhood of device $i$. The resulting matrix $\mathbf{A}_{\text{new}} = [a_{ij}^{\text{new}}]$ satisfies all feasibility requirements.

This two-step adjustment ensures that the matrix $\mathbf{A}$ retains the desired symmetric and doubly stochastic structure throughout the decentralized optimization process.

\subsection{Overall Decentralized Subgradient Algorithm}

We now integrate the key modules developed in the preceding sections to construct a fully decentralized subgradient method for solving the spectral minimization problem in~\eqref{pro111}. The resulting procedure—outlined in Algorithm~\ref{alg:leoDML}—combines the subgradient computation designed in Section~\ref{SA}, the eigenvector estimation  described in Section~\ref{dec}, and the feasibility restoration  introduced in Section~\ref{dpmc}.

\begin{algorithm}[t]
	\caption{Decentralized Subgradient Algorithm for Spectral Mixing Optimization}
	\label{alg:leoDML}
	\begin{algorithmic}[1]
		\STATE \textbf{Input:} Initial mixing matrix $\mathbf{A}(0)$ with non-negative entries, link reliability profile $\{q_{ij}\}$, step size $\gamma > 0$, total iterations $J_{\max}$.
		\STATE \textbf{Initialization:} Set $n = 0$.
		\FOR{$n = 0$ to $J_{\max} - 1$}
			\STATE Estimate the nontrivial eigenvector $\mathbf{v}(n)$ associated with $\rho(\mathbf{R}(\mathbf{A}))$ via the Chebyshev-accelerated procedure; see Section~\ref{dec}.
			\STATE Each device $i$ computes its local subgradient entries $\{g(a_{ij})\},{\forall j }$ using \eqref{aaa12}, based on $v_i$, $v_j$, and link statistics $\{q_{ij}\}$.
			\STATE Update the  weights via subgradient descent:
			\[
			a_{ij}^{\text{temp}}(n+1) = a_{ij}(n) - \gamma \cdot g(a_{ij}(n)).
			\]
			\STATE Apply the decentralized feasibility adjustment; see Section~\ref{dpmc}:
			\[
			\mathbf{A}(n+1) \xleftarrow{}(\mathbf{A}^{\text{temp}}(n+1)).
			\]
		\ENDFOR
		\ENSURE Optimized aggregation matrix $\mathbf{A}^\star = \mathbf{A}(J_{\max})$.
	\end{algorithmic}
\end{algorithm}

Algorithm \ref{alg:leoDML} is fully decentralized and requires only local message exchange and computation,  aligning well with practical device networks operating under dynamic and infrastructure-less conditions. The final matrix $\mathbf{A}^\star$ minimizes the spectral mixing radius, thereby accelerating the convergence of DML.
The time complexity of Algorithm~\ref{alg:leoDML} is $\mathcal{O}(J_{\max} K N)$, where $J_{\max}$ is the number of subgradient iterations, and $K$ denotes the number of Chebyshev iterations for eigenvector estimation. This linear scaling with respect to the network size makes the algorithm computationally feasible for large-scale decentralized systems.

\section{Numerical Results}
\subsection{Simulation Setup}

To assess the performance of the proposed decentralized subgradient algorithm for DML under realistic channel conditions, we simulate an orbital plane within an LEO satellite constellation. Following the Starlink Phase I configuration~\cite{mcdowell2020low}, the orbital plane consists of $N = 22$ satellites, randomly distributed along a circular orbit.
The link success probability $q_{ij}$ between two devices $i$ and $j$ is determined by the following three physical factors: (i) inter-satellite distance, (ii) beam pointing deviation, and (iii) environmental interference.

These effects are captured by the following model:
\begin{equation}
q_{ij} = 1 - \max \left\{
    \frac{\alpha_d d_{ij}}{d_{\max}},\;
    \frac{\alpha_\theta \theta_{ij}}{\theta_{\max}},\;
    w_{ij}
\right\},
\label{eq:link-successful transmission-model}
\end{equation}
where $d_{ij}$ denotes the Euclidean distance between devices $i$ and $j$, and $d_{\max}$ is the maximum permissible link distance. We set  $d_{\max}=3,000$ km. $d_{ij}$ is computed via great-circle arcs. $\theta_{ij}$ is the beam steering angle deviation from device $j$ to device $i$, while $\theta_{\max} = 60^\circ$ is the maximum allowable angular offset for reliable alignment.  $\theta_{ij}$ is derived from the relative angular orientation of each node’s orbital tangent vectors. The coefficient $w_{ij} \in [0,1]$ accounts for environment-dependent interference effects, e.g., atmospheric disturbance or orbital position degradation~\cite{grubsky2020effects}. The positive constants $\alpha_d$ and $\alpha_\theta$ normalize the relative influence of distance and angular deviation, respectively~\cite{wilson2005free}.
Unless otherwise specified, we use the following default parameter values:
$
\alpha_d = 0.7, \, \alpha_\theta = 0.8$, and $ w_{ij} = 0.05$.

We conduct DML experiments on a simulated satellite constellation using the EuroSAT remote sensing dataset~\cite{helber2019eurosat}. The dataset contains 27,000 geo-referenced RGB images captured by the Sentinel-2 satellite, each annotated with one of 10 land cover or land use classes.  An overview of the dataset is illustrated in Fig.~\ref{fig:eurosat_samples}.
We adopt a lightweight convolutional neural network (CNN) architecture, termed {LightCNN\_EuroSAT}, to perform multi-class classification. The network consists of three convolutional blocks, followed by two fully connected layers, with a total parameter count under one million.
The global dataset is partitioned into $N = 22$ local subsets, one per satellite. Each satellite is assigned an IID local dataset with the sample size  drawn uniformly from $\mathrm{Unif}(100, 125)$, reflecting statistical variations across orbital regions. The participating satellites collaboratively train a global model using the decentralized learning algorithm described in Section~II. 

\begin{figure}[htbp]
\centering

\begin{subfigure}[t]{0.19\linewidth}
    \includegraphics[width=\linewidth]{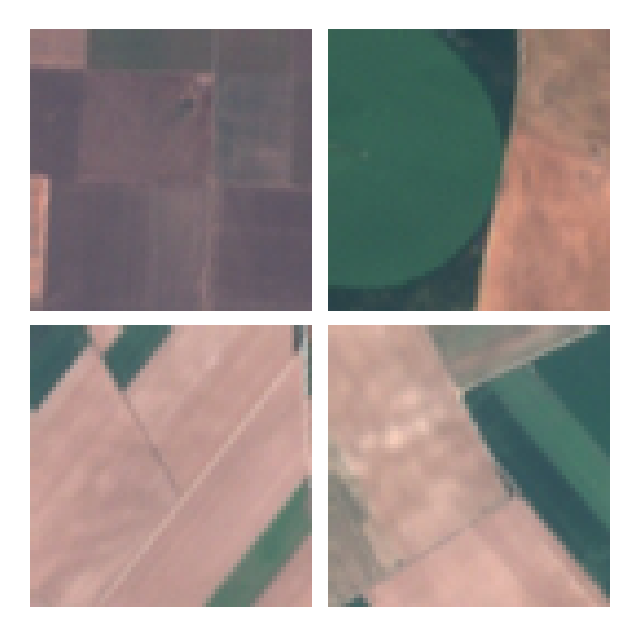}
    \caption*{Annual Crop}
\end{subfigure}
\begin{subfigure}[t]{0.19\linewidth}
    \includegraphics[width=\linewidth]{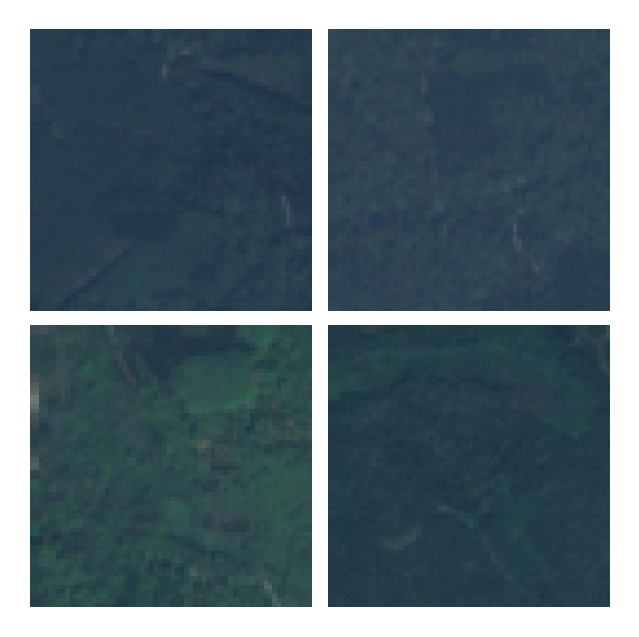}
    \caption*{Forest}
\end{subfigure}
\begin{subfigure}[t]{0.19\linewidth}
    \includegraphics[width=\linewidth]{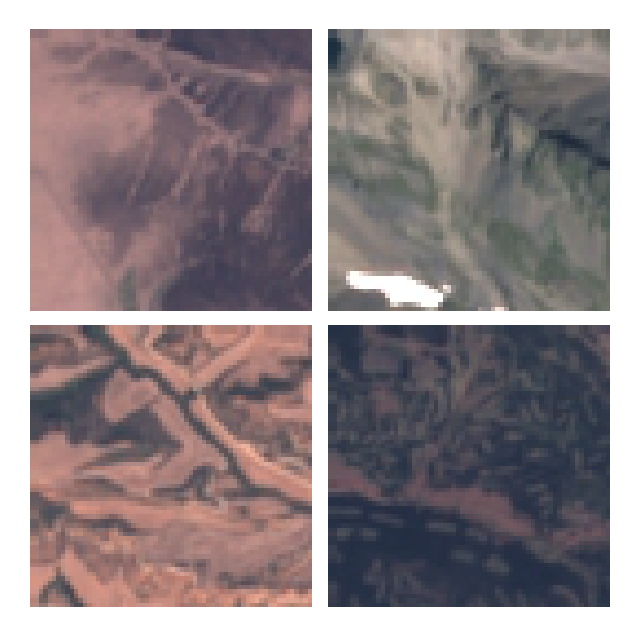}
    \caption*{Herbaceous Veg.}
\end{subfigure}
\begin{subfigure}[t]{0.19\linewidth}
    \includegraphics[width=\linewidth]{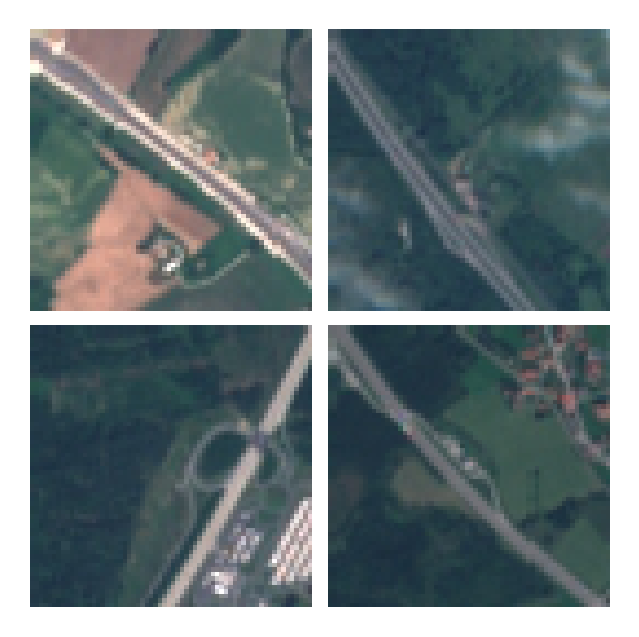}
    \caption*{Highway}
\end{subfigure}
\begin{subfigure}[t]{0.19\linewidth}
    \includegraphics[width=\linewidth]{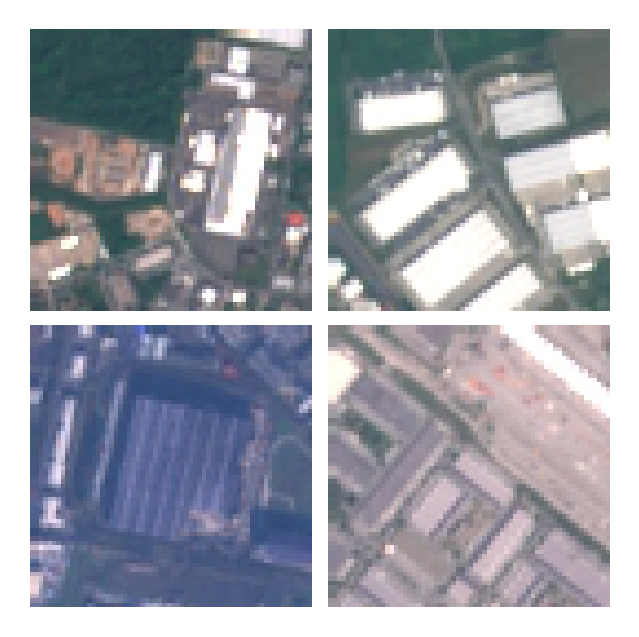}
    \caption*{Industrial}
\end{subfigure}

\par\vspace{0.5em}

\begin{subfigure}[t]{0.19\linewidth}
    \includegraphics[width=\linewidth]{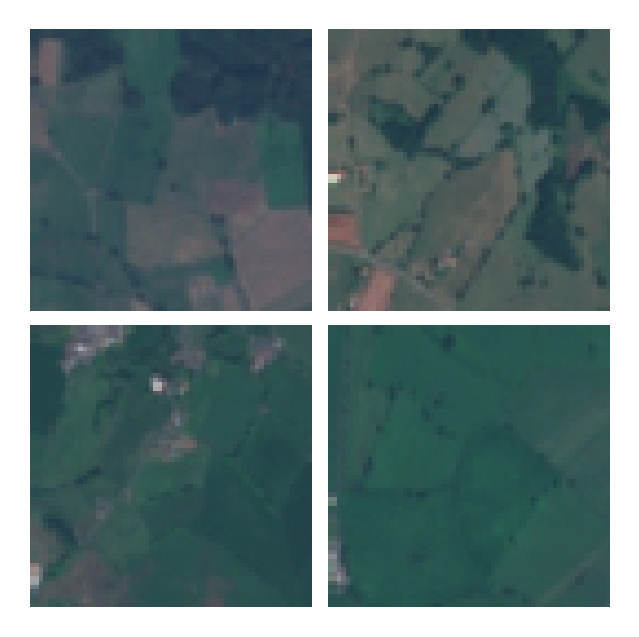}
    \caption*{Pasture}
\end{subfigure}
\begin{subfigure}[t]{0.19\linewidth}
    \includegraphics[width=\linewidth]{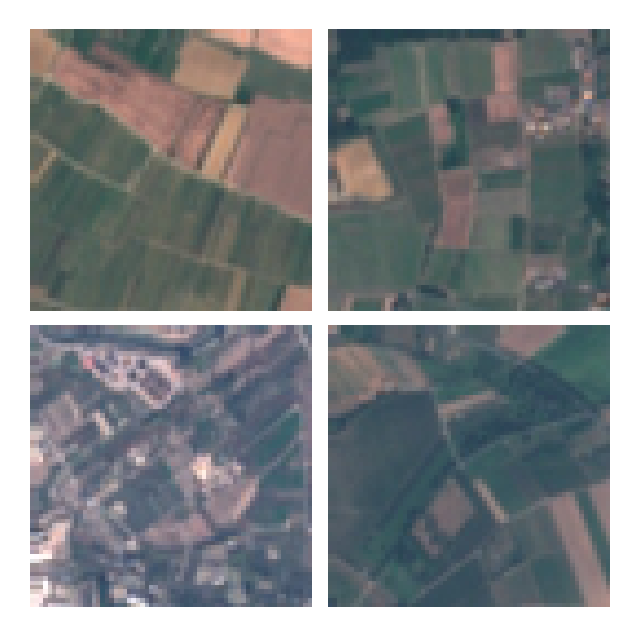}
    \caption*{Permanent Crop}
\end{subfigure}
\begin{subfigure}[t]{0.19\linewidth}
    \includegraphics[width=\linewidth]{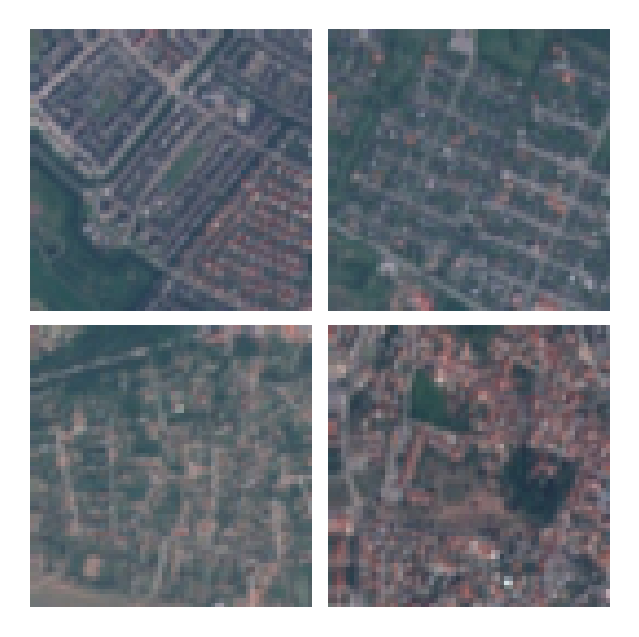}
    \caption*{Residential}
\end{subfigure}
\begin{subfigure}[t]{0.19\linewidth}
    \includegraphics[width=\linewidth]{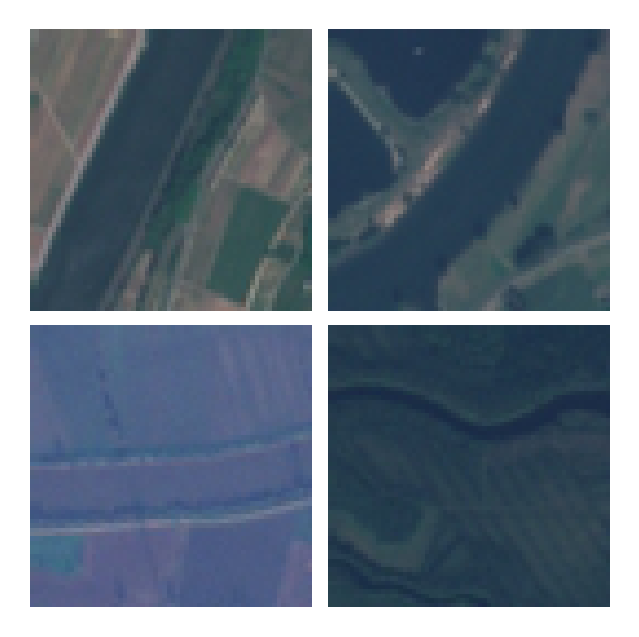}
    \caption*{River}
\end{subfigure}
\begin{subfigure}[t]{0.19\linewidth}
    \includegraphics[width=\linewidth]{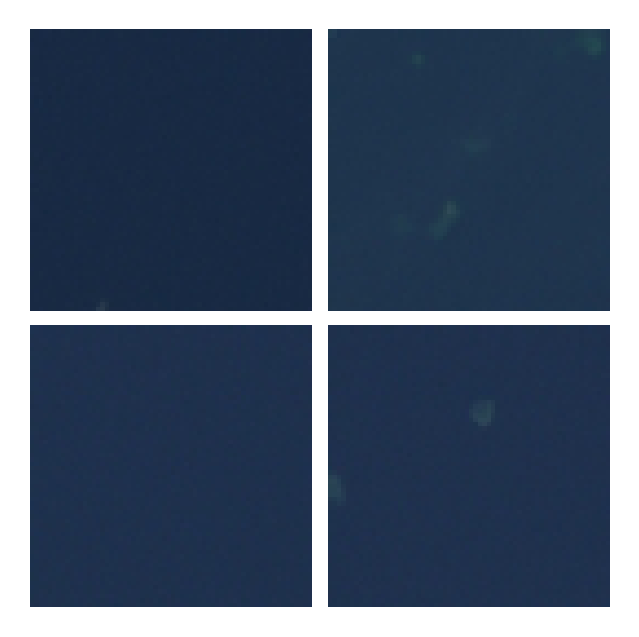}
    \caption*{Sea \& Lake}
\end{subfigure}

\caption{Visualization of the EuroSAT dataset.}
\label{fig:eurosat_samples}
\end{figure}
\subsection{Effect of Successful Link Transmission Parameters}
\label{ELS}

We first investigate how the parameters of the link success probability model affect the overall network connectivity. Specifically, we examine how $\alpha_d$, $\alpha_\theta$, and $w_{ij}$ influence the distribution of the link success probabilities $q_{ij}, \forall i,j$.

To describe the distribution, we use the cumulative distribution function (CDF) of $q_{ij}$. The CDF shows the probability that a randomly selected link has a success probability lower than, or equal to, a given x-axis value. A CDF curve further right indicates that the inter-satellite links have overall higher success probabilities,  implying better overall connectivity.

We evaluate three representative parameter settings:

\begin{itemize}
    \item \textbf{Set A:} $\alpha_d = 0.5$, $\alpha_\theta = 0.7$, $w_{ij} = 0.05$;
    \item \textbf{Set B:} $\alpha_d = 0.7$, $\alpha_\theta = 0.9$, $w_{ij} = 0.05$;
    \item \textbf{Set C:} $\alpha_d = 0.9$, $\alpha_\theta = 0.5$, $w_{ij} = 0.10$.
\end{itemize}

Fig.~\ref{fig:cdf_qij} shows the CDFs of the link success probabilities $q_{ij}$ under the three parameter sets. In this figure, {Set~A} yields the most favorable distribution, with a larger portion of links achieving high $q_{ij}$. {Set~B} shows the steepest rise, indicating more links have degraded success probabilities due to stronger geometric sensitivity. {Set~C} performs in-between, but the larger $w_{ij}$ lowers its overall $q_{ij}$ values further.

\begin{figure}[t]
    \centering
    \includegraphics[width=0.45\textwidth]{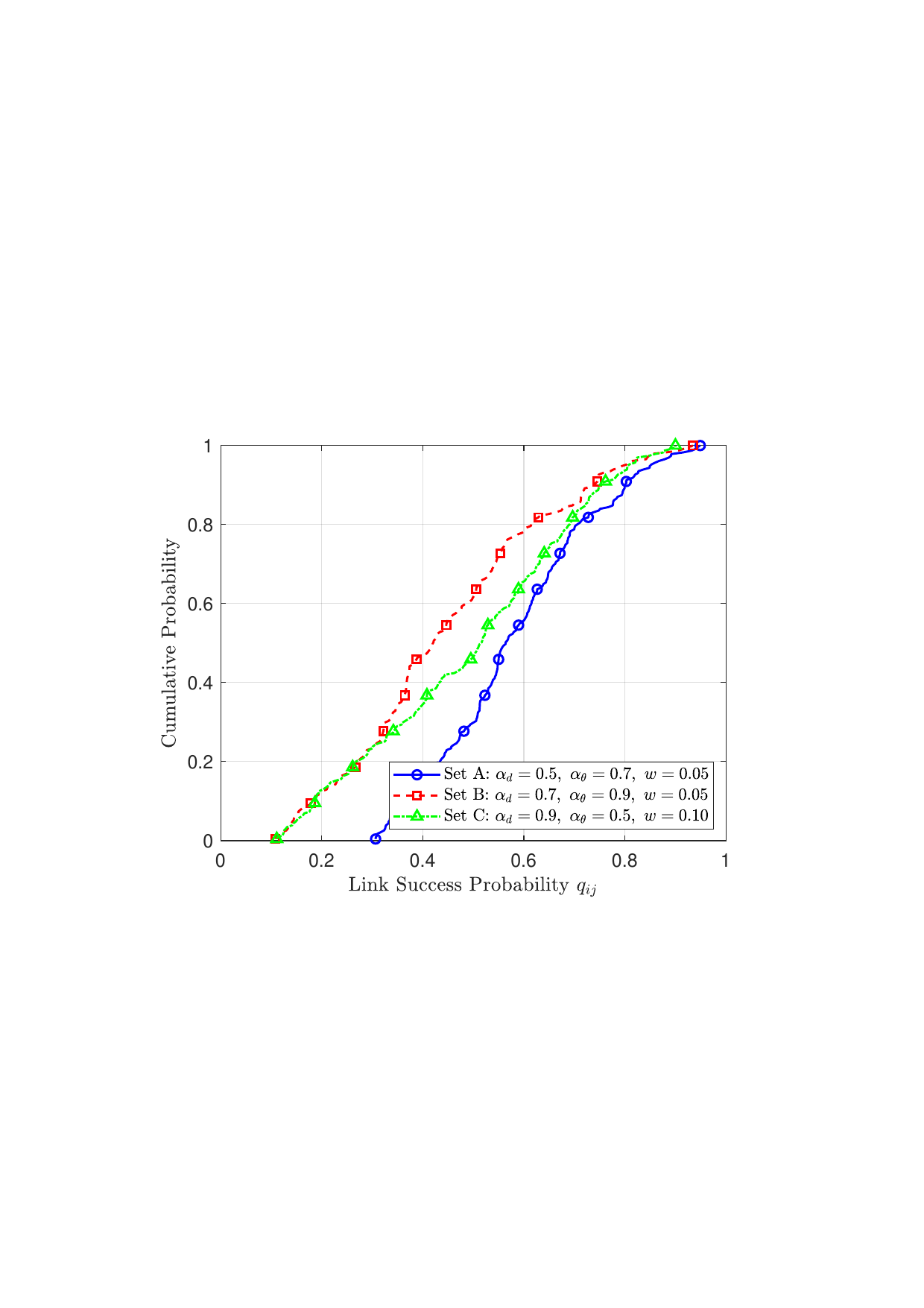}  
    \caption{CDF of link successful transmission probabilities $q_{ij}$ under three parameter configurations.}
    \label{fig:cdf_qij}
\end{figure}
\subsection{Validation of Convergence Analysis}

We validate the convergence analysis presented in Theorem \ref{thm:convergence}, and conduct a set of controlled experiments by directly generating expected mixing matrices $\overline{\mathbf{P}}$ with varying spectral radius $\rho(\overline{\mathbf{P}})$. Specifically, we use the convex optimization tool CVXPY~\cite{diamond2016cvxpy} to synthesize symmetric doubly stochastic matrices with $\rho(\overline{\mathbf{P}}) \in \{0, 0.25, 0.46, 0.74, 0.92\}$. This  allows  isolating the effect of spectral mixing quality from the physical link model.

Figs.~\ref{fig:rho_accuracy_curve} and \ref{fig:rho_accuracy_curve_min} present the average and minimum test accuracy curves over communication rounds under different values of $\rho(\overline{\mathbf{P}})$, respectively. Fig.~\ref{fig:rho_accuracy_curve} shows the global average accuracy across all satellites, while the Fig.~\ref{fig:rho_accuracy_curve_min} reports the worst-case performance, i.e., the lowest test accuracy among all nodes at each round.
From both figures, we observe a  monotonic relationship between $\rho(\overline{\mathbf{P}})$ and the learning performance: A smaller $\rho(\overline{\mathbf{P}})$ (i.e., better network connectivity and faster consensus mixing) leads to significantly faster convergence and better final accuracy. This trend is  pronounced in the minimum accuracy plot, where systems with a large $\rho(\overline{\mathbf{P}})$ suffer from persistent performance gaps across nodes, reflecting poor synchronization and divergent model states. 
These results align with the convergence analysis in Theorem~\ref{thm:convergence}, where the consensus error bound scales with $\rho(\overline{\mathbf{P}})$ through $\Gamma(P^2)$.

\begin{figure}[t]
    \centering
    \includegraphics[width=0.46\textwidth]{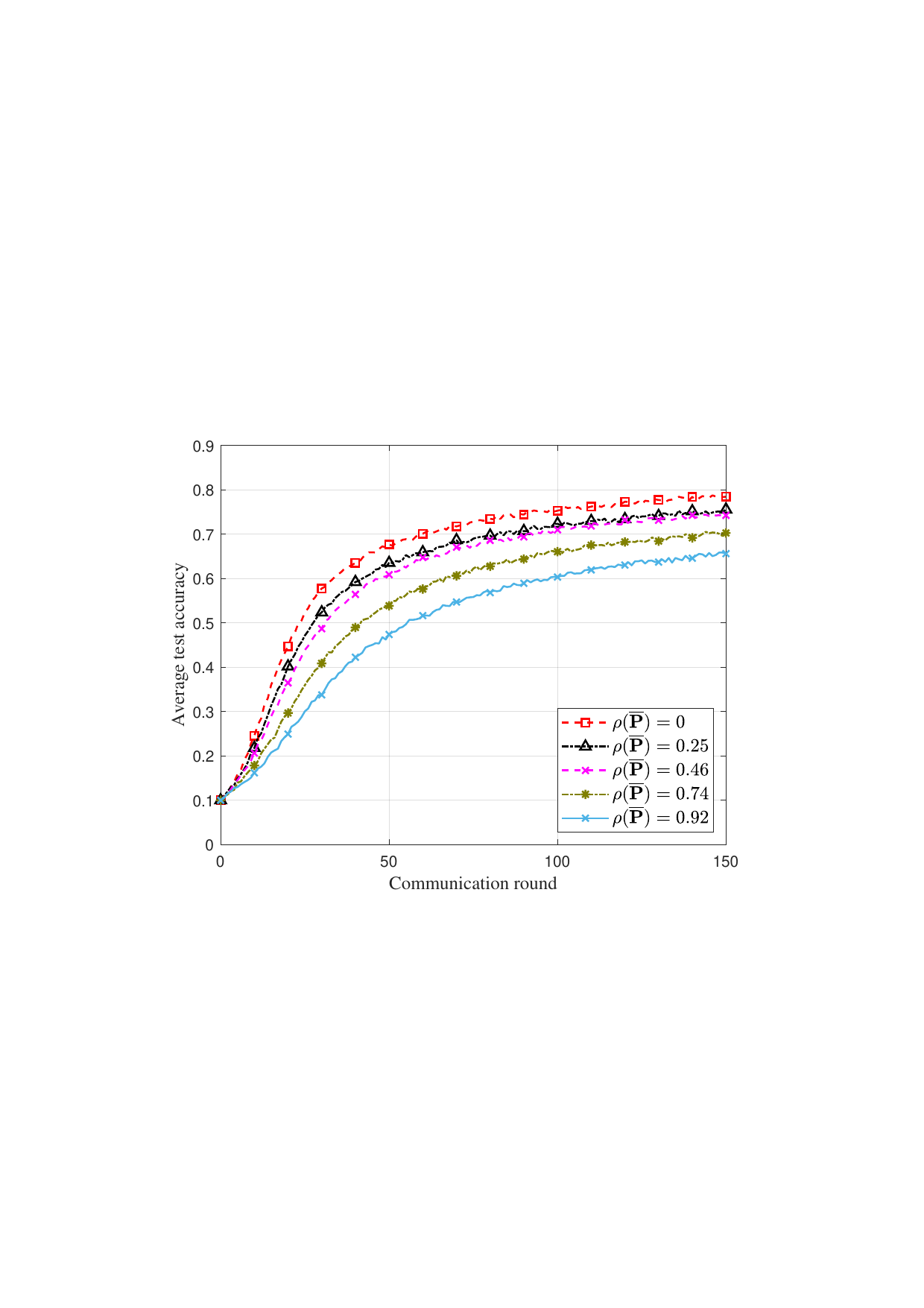}
    \caption{Average test accuracy versus communication round under various $\rho(\overline{\mathbf{P}})$ values generated via CVXPY. }
    \label{fig:rho_accuracy_curve}
\end{figure}
\begin{figure}[t]
    \centering
    \includegraphics[width=0.46\textwidth]{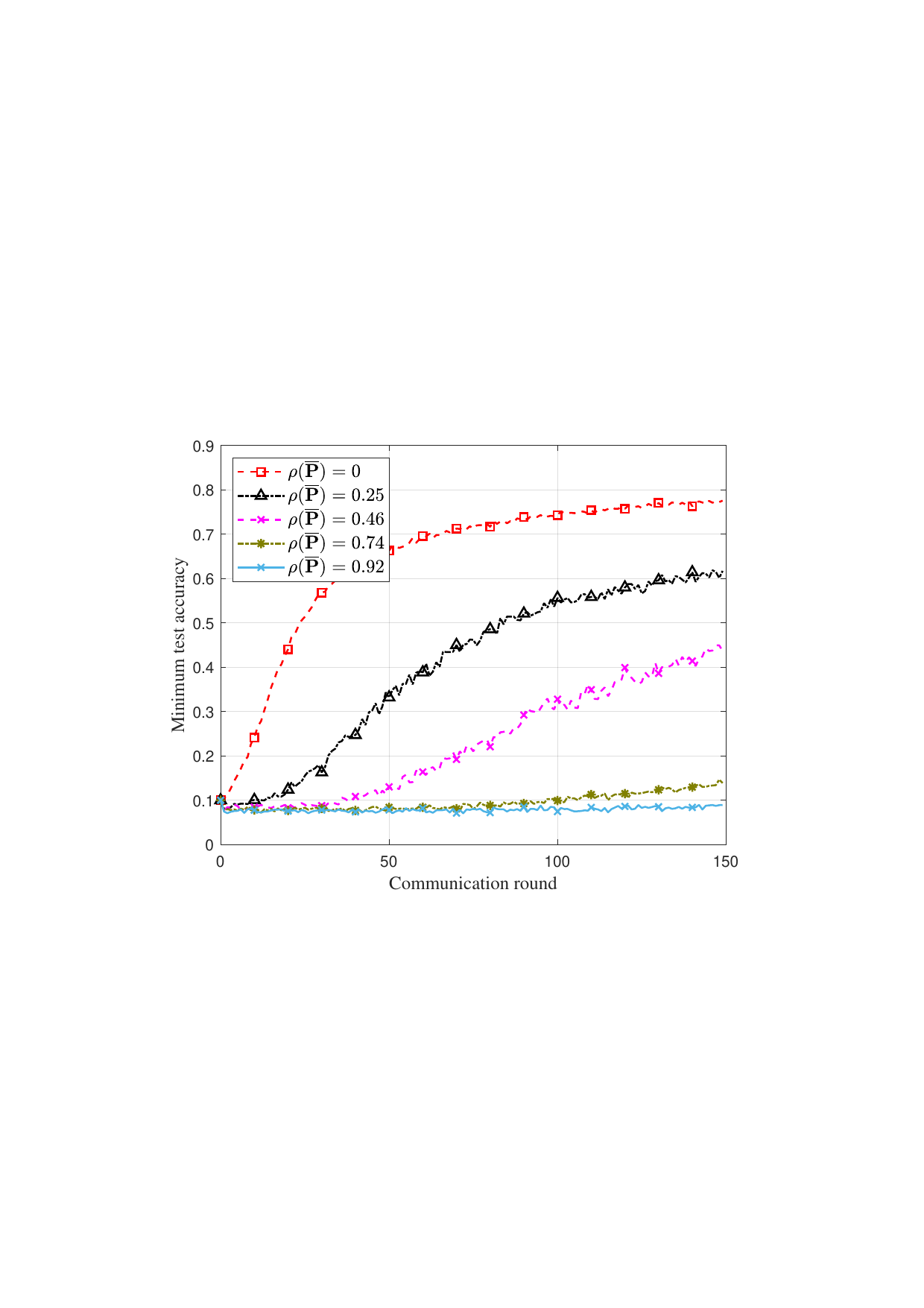}
    \caption{Minimum test accuracy versus communication round under various $\rho(\overline{\mathbf{P}})$ values generated via CVXPY. }
    \label{fig:rho_accuracy_curve_min}
\end{figure}

\subsection{Learning Performance Under Different Settings}

We further examine how different link environments affect the overall learning performance of the proposed subgradient algorithm. Based on the parameter settings defined earlier in Section~\ref{ELS} (i.e., Set~A, Set~B, and Set~C), we simulate the DML process under each setting to evaluate both convergence behavior and final model accuracy.
\begin{figure}[t]
    \centering
    \includegraphics[width=0.46\textwidth]{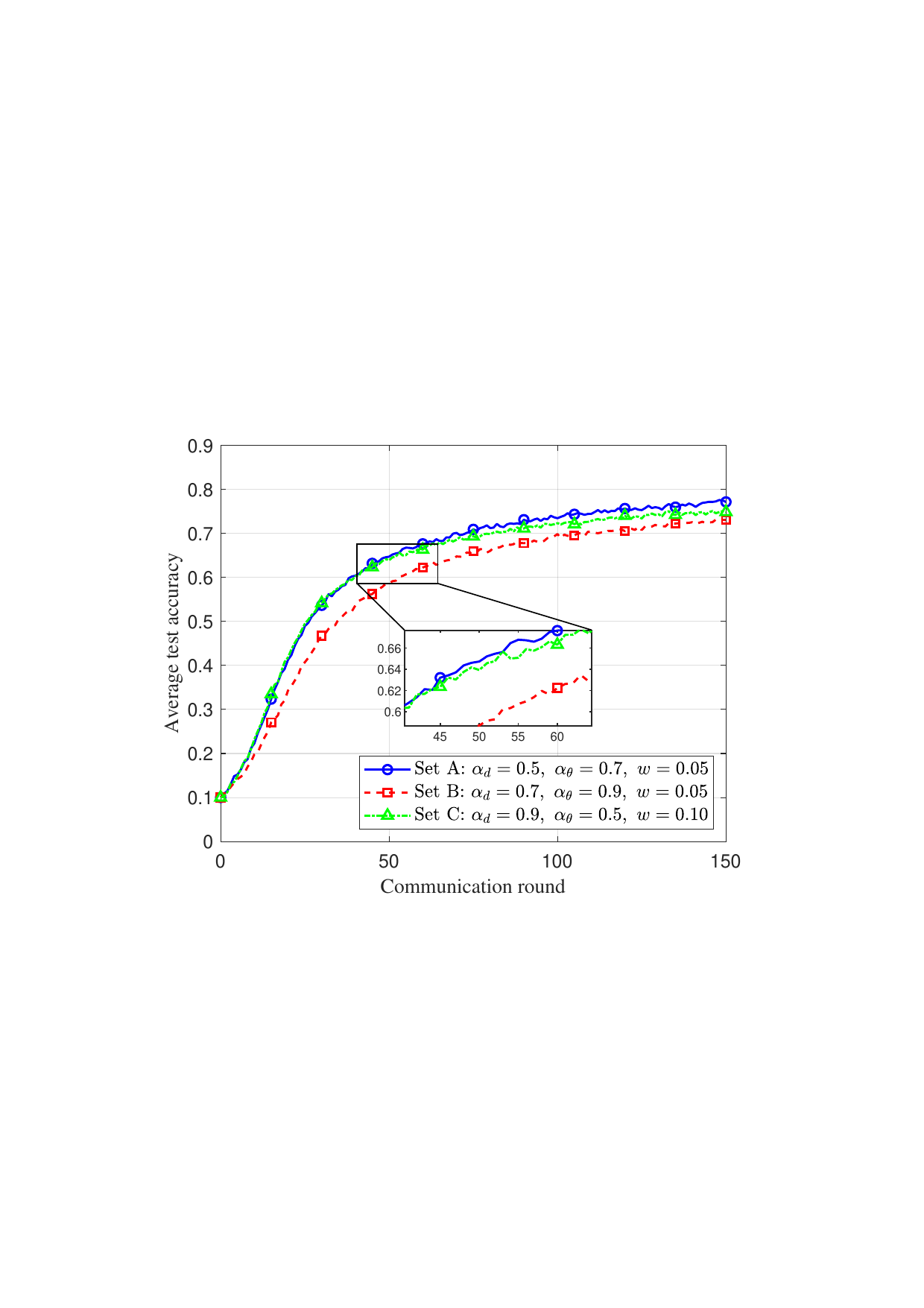}
    \caption{Average test accuracy versus communication round under various  settings. }
    \label{fig:rho_accuracy_curve_diff_setting}
\end{figure}

Fig.~\ref{fig:rho_accuracy_curve_diff_setting} shows the average and minimum test accuracy of the DML over communication rounds. As expected, Set~A yields the best performance, achieving faster convergence and better accuracy across all nodes. This is attributed to its more reliable inter-satellite connectivity, as previously observed in the CDFs of the link success probability. Set~B performs the worst, due to its high sensitivity to both distance and beam deviation, which causes more frequent communication failures and slower model mixing. Set~C exhibits moderate performance but suffers from a degraded reliability floor due to large $w_{ij},\forall i,j$. These observations highlight the critical role of link reliability in decentralized learning: Better connectivity not only accelerates consensus among devices but also improves the final model accuracy. Overall, the proposed method demonstrates robust adaptability across varying link conditions, with performance positively correlated with the quality of the underlying communication topology.

\subsection{Comparison with Benchmarks}

We compare the proposed decentralized subgradient  algorithm against the following state-of-the-art DML strategies adapted from prior work to fit the satellite network setting:

\begin{itemize}
    \item \textbf{Centralized Weight Optimization (CWO) \cite{ye2022decentralized}}: This idealized scheme assumes the existence of a ground-based coordinator that has full access to the inter-satellite link statistics (i.e., the matrix $\mathbf{P}$). It performs the global optimization of aggregation coefficients using the method in~\cite{9716792} and then broadcasts the results  to all satellites.  This method is impractical in LEO systems due to limited Earth access and high latency in practice.

    \item \textbf{Uniform Aggregation without Link Awareness (UWA)~\cite{lin2022distributed}}: Each satellite assigns the same weight to its neighbors, ignoring dynamic probability. This corresponds to using the setting $\mathbf{A} = \frac{1}{N}\mathbf{1}\mathbf{1}^\mathrm{T}$ throughout the training process. Since this benchmark does not adapt to link variability, the resulting $\overline{\mathbf{P}}$ often leads to suboptimal convergence.

    \item \textbf{Fully Reliable Link Approximation (FRLA)}: This method sets ideal communication between all satellites and uses the uniform aggregation weight, which yields $\mathbf{P} = \mathbf{1}\mathbf{1}^\mathrm{T}$ and $\rho(\overline{\mathbf{P}}) = 0$. While this method offers the best-case scenario for the convergence of DML, it fails to reflect the intermittent and directional nature of real-world LEO laser links.

    \item \textbf{Topology-Based Metropolis Weighting (TB-MH) \cite{chib1995understanding}}: Under this scheme, satellite links with success probability below a threshold $q_\delta$ are considered unusable. The resulting  graph defines a static communication topology, over which the aggregation coefficients are computed using the Metropolis-Hastings rule~\cite{robert2004metropolis}. While simple, this heuristic ignores fine-grained probability and requires manual threshold tuning. In our test, we set $q_\delta = 0.8$.
\end{itemize}

\begin{figure}[t]
    \centering
    \includegraphics[width=0.46\textwidth]{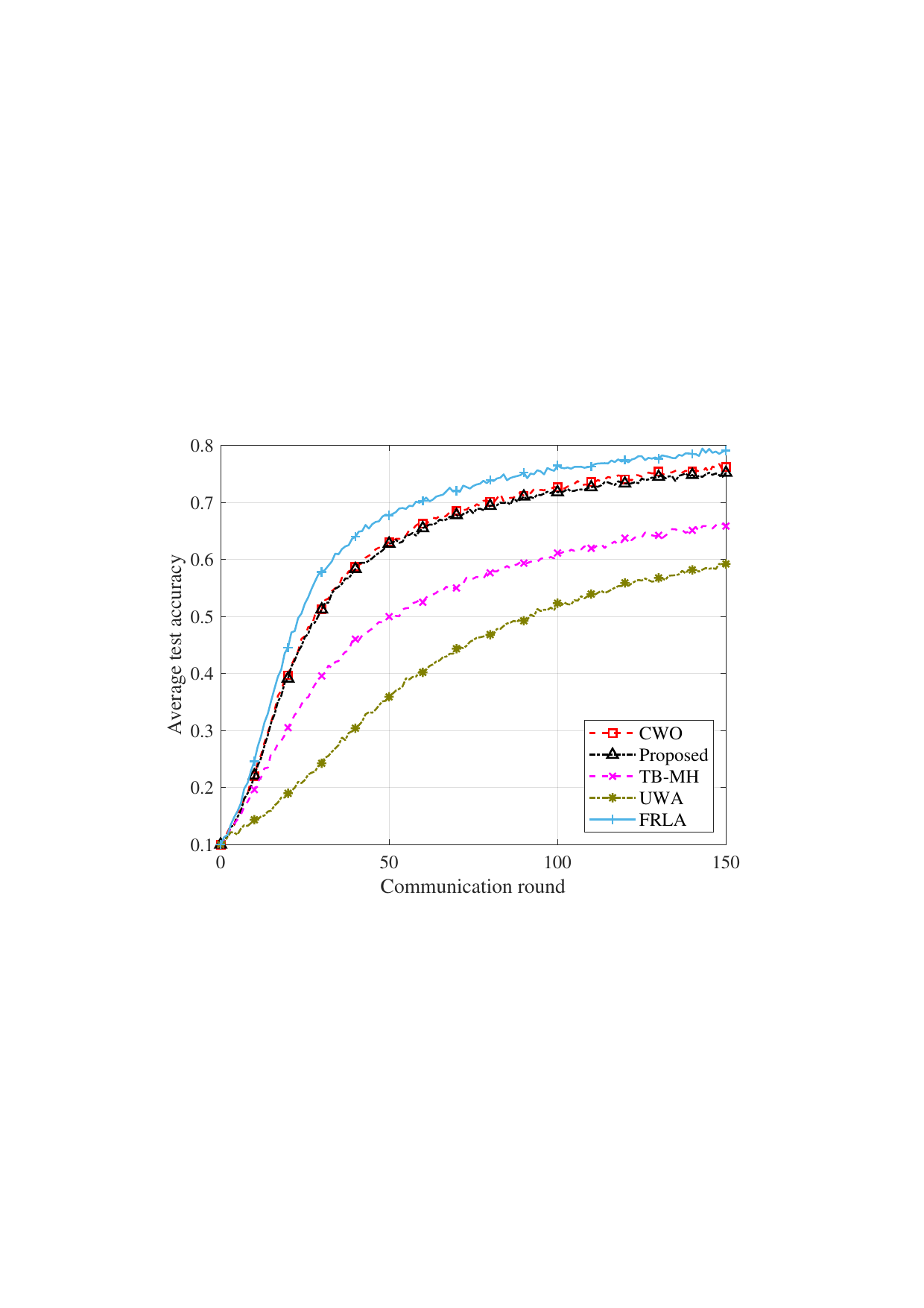}
    \caption{Average test accuracy versus communication round under various  schemes. }
    \label{fig:rho_accuracy_curve_diff_schemes}
\end{figure}

Fig.~\ref{fig:rho_accuracy_curve_diff_schemes} illustrates the convergence behavior of the proposed subgradient algorithm compared with four benchmark strategies in terms of average test accuracy over 150 communication rounds. Among all schemes, {FRLA} achieves the best performance by assuming perfect inter-satellite connectivity and uniform aggregation, thus providing an ideal upper bound.

The proposed subgradient algorithm closely approaches the FRLA performance, especially in the later stages of training, and maintains a small gap throughout. This shows that our decentralized optimization framework—despite operating under stochastic and unreliable links—can achieve near-optimal model mixing efficiency. In addition, it nearly matches the performance of {CWO}, which relies on centralized coordination and global knowledge of the network, highlighting the effectiveness of our fully local subgradient strategy.

{TB-MH} exhibits moderate performance but suffers from degraded convergence due to its rigid link pruning and reliance on static topology. The {UWA} method performs the worst, as it assigns uniform weights regardless of link reliability, leading to inefficient communication and slow learning.
Overall, the proposed subgradient algorithm significantly improves convergence speed and accuracy compared to traditional aggregation strategies, while requiring no global coordination.

\section{Conclusion}

This paper investigated DML under dynamic and unreliable network topologies, where device-to-device communication is subject to probabilistic failures. We modeled the random  link availability through time-varying mixing matrices and formulated  decentralized SGD using a compact matrix representation. To characterize the asymptotic convergence, we derived theoretical bounds that explicitly depend on the second-order statistics of the mixing process and proposed a tractable surrogate optimization objective based on the spectral radius of the expected mixing matrix. We developed a fully decentralized algorithm to minimize this objective, which includes local subgradient updates, distributed eigenvector estimation via Chebyshev acceleration, and a symmetric normalization mechanism to ensure feasibility. Simulation results on the EuroSAT dataset confirmed that the proposed decentralized optimization method  accelerates training convergence and improves model accuracy. These findings validate the applicability of the method in large-scale  networks and suggest its potential for broader DML deployment.

\ifCLASSOPTIONcaptionsoff
  \newpage
\fi


\bibliographystyle{IEEEtran}
\bibliography{references}

\end{document}